\definecolor{myurlcolor}{rgb}{0,0,0.7}
\def\be{\begin{equation}}
\def\ee{\end{equation}}
\def\bea{\begin{eqnarray*}}
\def\eea{\end{eqnarray*}}
\theoremstyle{plain}
\newtheorem{thrm}{\protect\theoremname}
\newtheorem{exam}[thrm]{Example}
\newtheorem{cor}[thrm]{Corollary}
\providecommand{\theoremname}{Theorem}
\newcommand{\iinner}[2]{\langle #1 | #2\rangle}
\DeclareMathOperator{\trace}{Tr}
\newcommand{\ptr}[2]{\trace_{#1}({#2})}
\newcommand{\tr}[1]{\ptr{}{#1}}
\newcommand*{\myproofname}{Proof}
\newcommand{\DOI}[2]{\href{https://doi.org/#1}{#2}}
\renewcommand{\Re}{\operatorname{Re}}
\theoremstyle{definition}
\theoremstyle{remark}
\begin{document}

\title{Quantum-State Texture as a Resource: Measures and Nonclassical Interdependencies}

\author{Yuntao Cui}
 \email{YunTao\_Cui@outlook.com}
 \affiliation{School of Mathematical Sciences, Harbin Engineering University, Harbin 150001, People's Republic of China}

 \author{Zhaobing Fan}
 \email{Corresponding author: fanzhaobing@hrbeu.edu.cn}
 \affiliation{School of Mathematical Sciences, Harbin Engineering University, Harbin 150001, People's Republic of China}

\author{Sunho Kim}
 \email{Corresponding author: kimsunho81@hrbeu.edu.cn}
 \affiliation{School of Mathematical Sciences, Harbin Engineering University, Harbin 150001, People's Republic of China}

\begin{abstract}
Quantum-state texture is a newly recognized quantum resource that has garnered attention with the advancement of quantum theory. In this work, we address several key aspects of quantum-state texture resource theory, including the quantification of quantum-state texture, quantum state transformation under free operations, and the relationships between quantum resources. We first propose two new measures of quantum-state texture and introduce a specific functional form for constructing such measures via the convex roof method. Then, we determine the maximum probability of quantum state transformation under free operations. Finally, we establish connections between quantum-state texture and other prominent quantum resources, such as coherence, imaginarity, and predictability. Our research contributes to the measure theory of quantum-state texture and enriches the overall framework of quantum-state texture resource theory.
\end{abstract}
\maketitle

\section{Introduction}

The notion of quantum-state texture (QST) was first proposed by Parisio \cite{ref1}. QST can be understood intuitively as follows: Given a computational basis ${\ket{i}}$, we can visualize the density matrix $\rho$ as a surface plot, where each matrix entry $\rho_{ij}$ represents an altitude. We treat the row and column indices as the first two dimensions. The real part of each $\rho_{ij}$ defines the altitude in one surface plot, and the imaginary part defines the altitude in an analogous plot. Thus, each quantum state corresponds to a pair of three-dimensional surface plots. From this perspective, these plots generally exhibit unevenness, or texture. The simplest possible plot is one where all altitudes are filled with the same number. The only quantum state that produces such a uniform plot is the state
\begin{equation}
f=\ket{f}\bra{f}
\end{equation}
where $\ket{f}=\frac{1}{\sqrt{d}}\sum_{i=0}^{d-1}\ket{i}$ and $d$ is the dimension of the Hilbert space $\mathcal{H}$. This state $f$ is a non-texture state and constitutes the zero-resource set of the theory. The associated free operations, corresponding to completely positive and trace-preserving (CPTP) maps $\Lambda$, are implemented by Kraus operators: $\Lambda(\rho)=\sum_{n}K_{n}\rho K_{n}^{\dagger}$, with $\sum_{n}K_{n}^{\dagger}K_{n}=\mathbbm{1}$. These free maps must satisfy $\Lambda(f)=f$, which implies $K_{n}\ket{f}\propto\ket{f}$ for all $n$.

Parisio \cite{ref1} demonstrated that a circuit layer containing at least one CNOT gate can be fully characterized by using randomized input states and analyzing the texture of the output qubits. Notably, this process does not require tomographic protocols or ancillary systems. As a quantum resource, QST is analogous to quantum coherence, which is a key component in emerging quantum technologies such as quantum metrology \cite{ref2,ref3}, quantum computing \cite{ref4}, nanoscale thermodynamics \cite{ref5,ref6,ref7,ref8}, and biological systems \cite{ref9,ref10,ref11}. Therefore, QST may also have a positive impact on quantum information tasks.

As a newly emerging quantum resource, quantum-state texture (QST) has spurred the development of measurement techniques. While Ref.~\cite{ref12} proposed various measurement methods, this work complements the framework by introducing two distance-based QST measures. Unlike Ref.~\cite{ref12}, we employ affinity and Tsallis relative $\alpha$-entropy for this purpose. We further demonstrate that the QST measure defined by the Hellinger distance is a special case of the measure defined by affinity. Additionally, we propose a general method for constructing QST measures, denoted as $\Upsilon_{g}$, using monotonic functions $g$, and provide a specific example that plays a pivotal role in subsequent sections.

Inspired by the resource theories of entanglement~\cite{ref13,ref14}, coherence~\cite{ref15,ref16,ref17,ref18,ref19}, and imaginarity~\cite{ref20,ref21}, numerous resource-related tasks remain to be explored. This paper focuses on two fundamental tasks within the quantum resource theory of QST. The first is state transformation via free operations, a central question in any resource theory. We address the stochastic transformation of states under QST free operations and derive the optimal transformation probability, $p$. The second task involves elucidating the relationships between different quantum resources. We investigate the connections between QST and other resources---primarily coherence, imaginarity, and predictability---and establish several relations in the form of equations and inequalities under different QST measures.

The remainder of this paper is organized as follows. In Sec.~\uppercase\expandafter{\romannumeral2}, we introduce three distinct QST measures and provide a specific example. Section~\uppercase\expandafter{\romannumeral3} is devoted to state transformation via free operations. In Sec.~\uppercase\expandafter{\romannumeral4}, we discuss the relationships between QST and other quantum resources. Finally, we present our conclusions in Sec.~\uppercase\expandafter{\romannumeral5}.

\section{Quantum-state texture measures}\label{sec:2}

The basic requirements of a QST measure were introduced in Ref. \cite{ref1}. Specifically, for arbitrary QST measure $\Upsilon$, it must satisfy the following conditions:\\
$(1).$ $\Upsilon(\rho)\geq0$ and $\Upsilon(f)=0$;\\
$(2).$ $\Upsilon(\rho)\geq\Upsilon[\Lambda(\rho)]$, where $\Lambda$ is completely positive and trace-preserving maps satisfied $\Lambda(f)=f$, are affected by Kraus operators, $\Lambda(\rho)=\sum_{n}K_{n}\rho K_{n}^{\dagger}$, for which $\sum_{n}K_{n}^{\dagger}K_{n}=\mathbbm{1}$;\\
$(3).$ $\Upsilon$ is convex, i.e., $\Upsilon(\sum_{i}p_{i}\rho_{i})\leq\sum_{i}p_{i}\Upsilon(\rho_{i})$, $\sum_{i}p_{i}=1$, $p_{i}\geq0$.

QST, as a quantum resource, may also hold potenital value due to its close connection with coherence. Thus, exploring possible candidates for QST measurement and analyzing them is an interesting and relevant problem. The author of Ref. \cite{ref1} defined a state texture quantifier
\begin{equation}
	\mathcal{R}(\rho)=-\ln\bra{f}\rho\ket{f},
\end{equation}
which they refered to it as "state rugosity". And a variety of QST meaurements based on distance definition have been proposed, such as trace distance, geometric distance, Bure distance and fidelity \cite{ref12}. In this chapter, we also propose several measurements for QST, and try to establish some connections with the proposed measurements.

The minimum distance between quantum states $\rho$ and non-resource states $\sigma\in\mathbbm{I}$ is widely used as the most fundamental method of resource quantification in various quantum resource theories. In QST, the distance between quantum state $\rho$ and the non-texture state $f$ is used, as the free state is solely $f$ \cite{ref1,ref12}.
\begin{equation}
	\Upsilon(\rho)=D(\rho,f).
\end{equation}
Thus, we can define QST measure base on different distances.

\subsection{Quantum-state texture measurement based on affinity}

The affinity of two quantum states is defined as
\begin{equation}
	A(\rho,\sigma)=\tr{\sqrt{\rho}\sqrt{\sigma}}.
\end{equation}
Quantum affinity, similar to fidelity \cite{ref22}, describes how close two quantum states are. It also possess all the properties of fidelity. This quantity is more useful in quantum detection and estimation theory. The notion of affinity has been extended to $\alpha$-affinity $(0<\alpha<1)$, which is defined as
\begin{equation}
	A_{\alpha}(\rho,\sigma)=\tr{\rho^{\alpha}\sigma^{1-\alpha}}
\end{equation}
with $\alpha\in(0,1)$. The $\alpha$-affinity satisfies the following properties \cite{ref23}:\\
(1). $\alpha$-affinity is bounded:
\begin{equation}
	0\leq A_{\alpha}(\rho,\sigma)\leq1
\end{equation}
and $A_{\alpha}(\rho,\sigma)=1$ if and only if $\rho=\sigma$ for all values of $\alpha$.\\
(2). Monotonic:
\begin{equation}
	A_{\alpha}(\Phi(\rho),\Phi(\sigma))\geq A_{\alpha}(\rho,\sigma)
\end{equation}
under a completely positive and trace preserving (CPTP) map.\\
(3). Joint concavity:
\begin{equation}
	A_{\alpha}(\sum_{i}P_{i}\rho_{i},\sum_{i}P_{i}\sigma_{i})\geq\sum_{i}P_{i}A_{\alpha}(\rho_{i},\sigma_{i}).
\end{equation}
So, we can define a measure of QST base on $\alpha$-affinity
\begin{equation}
	\Upsilon_{A}(\rho)=1-A_{\alpha}(\rho,f)
\end{equation}
where $0<\alpha<1$ and $f=\ket{f}\bra{f}$ is non-texture state. It's easy to prove that.

\begin{thrm}
	$\Upsilon_{A}(\rho)=1-A_{\alpha}(\rho,f)$ is a QST measure.
\end{thrm}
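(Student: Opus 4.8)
The plan is to verify, one at a time, the three defining properties of a bona fide QST measure listed in Sec.~II, each of which follows immediately from a corresponding property of the $\alpha$-affinity recalled above, so the argument is short. For \emph{nonnegativity and vanishing on the free state}, I would use the boundedness property: since $0\le A_{\alpha}(\rho,f)\le 1$ for every state $\rho$, we get $\Upsilon_{\alpha}(\rho)=1-A_{\alpha}(\rho,f)\ge 0$. Moreover $A_{\alpha}(f,f)=\tr{f^{\alpha}f^{1-\alpha}}=\tr{f}=1$ (equivalently, invoke the equality case $A_{\alpha}(\rho,\sigma)=1\iff\rho=\sigma$), so $\Upsilon_{\alpha}(f)=0$; the same equality case shows in passing that $\Upsilon_{\alpha}(\rho)=0$ forces $\rho=f$, i.e.\ the measure is faithful.

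For \emph{monotonicity}, the key observation is that a free operation $\Lambda$ is, by definition, a CPTP map satisfying $\Lambda(f)=f$, hence a special case of the maps appearing in property~(2) of the $\alpha$-affinity. Applying that property with $\Phi=\Lambda$ and second argument $f$ gives $A_{\alpha}(\Lambda(\rho),f)=A_{\alpha}(\Lambda(\rho),\Lambda(f))\ge A_{\alpha}(\rho,f)$, so $\Upsilon_{\alpha}(\Lambda(\rho))=1-A_{\alpha}(\Lambda(\rho),f)\le 1-A_{\alpha}(\rho,f)=\Upsilon_{\alpha}(\rho)$. For \emph{convexity}, I would apply joint concavity (property~(3)) with every second argument held fixed at $f$, writing $f=\sum_{i}p_{i}f$ with $\sum_{i}p_{i}=1$, $p_{i}\ge 0$: this yields $A_{\alpha}\big(\sum_{i}p_{i}\rho_{i},f\big)\ge\sum_{i}p_{i}A_{\alpha}(\rho_{i},f)$, and therefore $\Upsilon_{\alpha}\big(\sum_{i}p_{i}\rho_{i}\big)=1-A_{\alpha}\big(\sum_{i}p_{i}\rho_{i},f\big)\le\sum_{i}p_{i}\big(1-A_{\alpha}(\rho_{i},f)\big)=\sum_{i}p_{i}\Upsilon_{\alpha}(\rho_{i})$. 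Together, these three facts are precisely the statement that $\Upsilon_{\alpha}$ is a QST measure.

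I do not expect any genuine obstacle here: the only point worth stating explicitly is the identification of the QST free operations with the $f$-preserving CPTP maps, a subclass of all CPTP maps, which is what lets the generic data-processing and joint-concavity inequalities for $A_{\alpha}$ be used verbatim. If one wanted a fully self-contained proof rather than one that cites Ref.~\cite{ref23}, the remaining work would be to establish the three properties of the $\alpha$-affinity themselves---boundedness with its equality condition, monotonicity under CPTP maps, and joint concavity---which in turn reduce to Lieb's concavity theorem (equivalently the operator Jensen inequality); but the theorem as stated takes those as given, so the proof is essentially bookkeeping.
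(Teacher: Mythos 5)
Your proposal is correct and follows essentially the same route as the paper: it verifies nonnegativity from the boundedness of $A_{\alpha}$, monotonicity from the data-processing inequality applied with $\Lambda(f)=f$, and convexity from joint concavity with the second argument written as $\sum_{i}p_{i}f$. The only difference is cosmetic — you additionally spell out the faithfulness remark and note where a self-contained proof of the $\alpha$-affinity properties would come from.
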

\begin{proof}
	$\Upsilon_{A}(\rho)$ satisfies the propertise of QST measure.\\
	(1). Non-negative:
	\begin{equation}
		0\leq \Upsilon_{A}(\rho)=1-A_{\alpha}(\rho,f)\leq1
	\end{equation}
	and $\Upsilon_{A}(\rho)=0$ if and only if $\rho=f$.\\
	(2). Monotonic:
	\begin{equation}
		\begin{aligned}
			\Upsilon_{A}(\Lambda(\rho)) &=1-A_{\alpha}(\Lambda(\rho),f)\\ &=1-A_{\alpha}(\Lambda(\rho),\Lambda(f))\\ &\leq1-A_{\alpha}(\rho,f)\\ &=\Upsilon_{A}(\rho)
		\end{aligned}
	\end{equation}
	where $\Lambda$ is a completely positive and trace-preserving maps satisfied $\Lambda(f)=f$, that is free operator.\\
	(3). Convex:
	\begin{equation}
		\begin{aligned}
			\Upsilon_{A}(\sum_{i}P_{i}\rho_{i}) &=1-A_{\alpha}(\sum_{i}P_{i}\rho_{i},f)\\ &=1-A_{\alpha}(\sum_{i}P_{i}\rho_{i},\sum_{i}P_{i}f)\\ &\leq 1-\sum_{i}P_{i}A_{\alpha}(\rho_{i},f)\\ &=\sum_{i}P_{i}(1-A_{\alpha}(\rho_{i},f))\\ &=\sum_{i}P_{i}\Upsilon_{A}(\rho_{i})
		\end{aligned}
	\end{equation}
	the joint concavity of $A_{\alpha}$ is applied here.
\end{proof}

The Hellinger distance of quantum states $\rho$ and $\sigma$ is defined as \cite{ref24}:
\begin{equation}
	D_{H}(\rho,\sigma)=\tr{\sqrt{\rho}-\sqrt{\sigma}}^{2}.
\end{equation}
Referring to the distance measure of resource theory, the distance between quantum state $\rho$ and non-texture state $f$ can be used to measure the QST of $\rho$. So the Hillinger distance between $\rho$ and $f$
\begin{equation}
	D_{H}(\rho,f)=\tr{\sqrt{\rho}-\sqrt{f}}^{2}
\end{equation}
should be a QST measure. In fact, we can see that $D_{H}(\rho,f)$ and $\Upsilon_{A}(\rho)$ are related.
\begin{equation}
	\begin{aligned}
		D_{H}(\rho,f) &=\tr{\sqrt{\rho}-\sqrt{f}}^{2}\\ &=\tr{\rho+f-2\sqrt{\rho}\sqrt{f}}\\ &=2[1-\tr{\sqrt{\rho}\sqrt{f}}]\\ &=2[1-A_{\frac{1}{2}}(\rho,f)].
	\end{aligned}
\end{equation}
In this way, $D_{H}(\rho,f)$ is a special form of $\Upsilon_{A}(\rho)$ in the case of $\alpha=\frac{1}{2}$. So, $D_{H}(\rho,f)$ also satisifes the properties of QST measure, the only different point is
\begin{equation}
	0\leq D_{H}(\rho,f)\leq2.
\end{equation}

\subsection{Quantum-state texture measurement based on Tsallis relative $\alpha$ entropy}

For $0\leq\alpha\neq1$, the Tsallis relative $\alpha$ entropy is defined as \cite{ref25,ref26}:
\begin{equation}
	D_{\alpha}(p\|q)=\frac{1}{\alpha-1}(\sum_{j}p_{j}^{\alpha}q_{j}^{1-\alpha}-1).
\end{equation}
The quantum relative entropy is expressed as \cite{ref27}:
\begin{equation}
	\begin{aligned}
		D_{1}(\rho\|\sigma)=
		\begin{cases}
			\tr{\rho\ln\rho-\rho\ln\sigma} & \mathbf{ran}(\rho)\subset \mathbf{ran}(\sigma)\\ +\infty & otherwise
		\end{cases},
	\end{aligned}
\end{equation}
where $\mathbf{ran}(\rho)$ is the range of $\rho$.
For $\alpha\in(1,+\infty)$, the Tsallis $\alpha$ divergence is defined as \cite{ref27,ref28}:
\begin{equation}
	\begin{aligned}
		D_{\alpha}(\rho\|\sigma)=
		\begin{cases}
			\frac{\tr{\rho^{\alpha}\sigma^{1-\alpha}}-1}{\alpha-1} & \mathbf{ran}(\rho)\subset \mathbf{ran}(\sigma)\\ +\infty & otherwise
		\end{cases}.
	\end{aligned}
\end{equation}
And $D_{\alpha}(\rho\|\sigma)$ has the following properties \cite{ref29}:

1. $D_{\alpha}(\rho\|\sigma)\geq0$, with equality if and only if $\rho=\sigma$.

2. For $\alpha\in(0,2]$, $D_{\alpha}(\Phi(\rho)\|\Phi(\sigma))\leq D_{\alpha}(\rho\|\sigma)$.

3. For $\alpha\in(0,2]$, $D_{\alpha}(\sum_{n}p_{n}\rho_{n}\|\sum_{n}p_{n}\sigma_{n})\leq\sum_{n}p_{n}D_{\alpha}(\rho_{n}\|\sigma_{n})$.

\begin{thrm}
	For $\alpha\in(0,2]$, $D_{\alpha}(\rho||f)$ is a QST measure.
\end{thrm}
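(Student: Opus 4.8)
The plan is to verify, in direct parallel with the proof of Theorem~1, the three requirements for a QST measure stated at the beginning of this section, using only the properties of the Tsallis relative $\alpha$-entropy recalled just above this theorem --- nonnegativity with equality iff the two arguments coincide, the data-processing inequality for $\alpha\in(0,2]$, and joint convexity for $\alpha\in(0,2]$ --- together with the single structural fact that the unique free (non-texture) state $f_{1}=\ket{f}\bra{f}$ is fixed by every free operation, $\Lambda(f_{1})=f_{1}$.

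For axiom (1) I would simply invoke property~1 of $D_{\alpha}$: $\Upsilon(\rho)=D_{\alpha}(\rho\|f_{1})\geq 0$, with $\Upsilon(\rho)=0$ iff $\rho=f_{1}$; in particular $\Upsilon(f_{1})=D_{\alpha}(f_{1}\|f_{1})=0$. For monotonicity (axiom (2)), let $\Lambda$ be a free operation; then, since $\Lambda(f_{1})=f_{1}$,
\begin{equation}
	\Upsilon(\Lambda(\rho))=D_{\alpha}(\Lambda(\rho)\|f_{1})=D_{\alpha}(\Lambda(\rho)\|\Lambda(f_{1}))\leq D_{\alpha}(\rho\|f_{1})=\Upsilon(\rho),
\end{equation}
the inequality being exactly the data-processing property of $D_{\alpha}$ on the range $\alpha\in(0,2]$. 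For convexity (axiom (3)), using $\sum_{i}p_{i}f_{1}=f_{1}$ for any probability vector $\{p_{i}\}$,
\begin{equation}
	\Upsilon\!\left(\sum_{i}p_{i}\rho_{i}\right)=D_{\alpha}\!\left(\sum_{i}p_{i}\rho_{i}\,\Big\|\,\sum_{i}p_{i}f_{1}\right)\leq\sum_{i}p_{i}D_{\alpha}(\rho_{i}\|f_{1})=\sum_{i}p_{i}\Upsilon(\rho_{i}),
\end{equation}
which is the joint convexity of $D_{\alpha}$, again for $\alpha\in(0,2]$. These three lines go through verbatim for $\alpha\in(0,1)$ and for the limiting relative-entropy case $\alpha=1$.

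The one point needing care --- and the only real obstacle I anticipate --- is the support condition built into the definition of $D_{\alpha}$ for $\alpha\in(1,2]$: because $f_{1}$ is rank one, $\mathrm{ran}(\rho)\subseteq\mathrm{ran}(f_{1})$ fails for every $\rho\neq f_{1}$, so $D_{\alpha}(\rho\|f_{1})=+\infty$ and hence $\Upsilon(\rho)=+\infty$ on all of state space except the free point. One then checks that the three axioms still hold with values in $[0,+\infty]$: nonnegativity and $\Upsilon(f_{1})=0$ are immediate, while the monotonicity and convexity inequalities above are trivially satisfied whenever their right-hand sides equal $+\infty$ and otherwise collapse to the case $\rho=f_{1}$ (respectively all $\rho_{i}=f_{1}$, where both sides vanish). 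Accordingly I would present the $\alpha\in(0,1]$ computation as the substantive content of the proof and dispatch $\alpha\in(1,2]$ with this short remark, noting in passing that in the latter range $\Upsilon$ is a valid but degenerate monotone.
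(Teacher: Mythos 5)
Your proof is correct and follows essentially the same route as the paper's: all three axioms are reduced to the listed properties of $D_{\alpha}$ (nonnegativity, data processing, joint convexity) together with $\Lambda(f)=f$ and $\sum_i p_i f = f$. Your additional remark about the support condition forcing $D_{\alpha}(\rho\|f)=+\infty$ for $\alpha\in(1,2]$ when $f$ is rank one is a subtlety the paper's proof does not address, and it is a worthwhile observation, but it does not change the structure of the argument.
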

\begin{proof}
	All the requirements can be derived from the properties of $\alpha$ relative entropy easily.
	
	1. $D_{\alpha}(\rho\|f)\geq0$, $D_{\alpha}(f\|f)=0$. And $D_{\alpha}(f^{\perp}\|f)=1$ reach the maximum value, futhermore the linear combination of $\ket{f^{\perp}}$ also reaches the maximum value.
	
	2. According to condition 2, available
	\begin{equation}
		D_{\alpha}(\Lambda(\rho)\|f)=D_{\alpha}(\Lambda(\rho)\|\Lambda(f))\leq D_{\alpha}(\rho\|f).
	\end{equation}
	
	3. According to condition 3, available
	\begin{equation}
		\begin{aligned}
			D_{\alpha}(\sum_{n}P_{n}\rho_{n}\|f) &=D_{\alpha}(\sum_{n}P_{n}\rho_{n}\|\sum_{n}P_{n}f)\\ &\leq\sum_{n}P_{n}D_{\alpha}(\rho_{n}\|f).
		\end{aligned}
	\end{equation}
\end{proof}

\subsection{Quantum-state texture measurement based on function}

Similar to other resources \cite{ref30,ref31,ref32,ref33}, we can also measure QST through functions. In the first, let us define a function $g:[0,1]\to[0,1]$ satisfying the following conditions:

1. $g(1)=0$;

2. $g$ is monotonically decreasing;

3. $g$ is concave: $g(\lambda x+(1-\lambda)y)\geq\lambda g(x)+(1-\lambda)g(y)$ for all $\lambda\in[0,1]$ and all $x,y\in[0,1]$.

For any pure state $\ket{\psi}\bra{\psi}$, define
\begin{equation}\label{22}
	\Upsilon_{g}(\ket{\psi})=g(|\iinner{f}{\psi}|^{2}),
\end{equation}
and for any mixed state $\rho$, define
\begin{equation}\label{23}
	\Upsilon_{g}(\rho)=\min_{\{p_{i},\ket{\psi_{i}}\}}\{\sum_{i}p_{i}\Upsilon_{g}(\ket{\psi_{i}})\}
\end{equation}
where $\rho=\sum_{i}p_{i}\ket{\psi_{i}}\bra{\psi_{i}}$.

\begin{thrm}
	For every function $g$ with conditions 1 to 3, the $\Upsilon_{g}$ defined in \eqref{22} and \eqref{23} is a QST measure.
\end{thrm}
\begin{proof}
	
	1. Non-negative:
	\begin{equation}
		\Upsilon_{g}(\rho)=g(|\iinner{f}{\psi}\iinner{\psi}{f}|^{2})=g(\frac{\sum_{i,j}\psi_{i}\psi_{j}^{\ast}}{d})
	\end{equation}
	$\sum_{i,j}\psi_{i}\psi_{j}^{\ast}=d$ reaches its maximum value when $\ket{\psi}=\ket{f}$, that is $\psi_{i}=\frac{1}{\sqrt{d}}$, meanwhile $\Upsilon_{g}(f)=g(1)=0$ reaches minimum value. So $\Upsilon_{g}(\rho)\geq0$ because of $g$ monotonically decreasing.
	
	2. Monotonicity:
	For pure states, there is
	\begin{equation}
		\begin{aligned}
			&\sum_{n}|\bra{f}K_{n}\ket{\psi}|^{2}\\ &=\sum_{n}\bra{f}K_{n}\ket{\psi}\bra{\psi}K_{n}^{\dagger}\ket{f}\\ &=\sum_{n}\bra{f}K_{n}(\alpha\ket{f}+\beta K_{n}\ket{f^{\perp}})(\alpha^{\ast}\bra{f}+\beta^{\ast}\bra{f^{\perp}})K_{n}^{\dagger}\ket{f}\\ &=\sum_{n}(\alpha a_{n}+\beta\bra{f}K_{n}\ket{f^{\perp}})(\alpha^{\ast}a_{n}^{\ast}+\beta^{\ast}\bra{f^{\perp}}K_{n}^{\dagger}\ket{f})\\ &\geq|\alpha|^{2}\\ &=|\iinner{f}{\psi}|^{2}
		\end{aligned}
	\end{equation}
	so it follows that
	\begin{equation}\label{26}
		\begin{aligned}
			\Upsilon_{g}(\ket{\psi}) &=g(|\iinner{f}{\psi}|^{2})\\
			&\geq g(\sum_{n}|\bra{f}K_{n}\ket{\psi}\bra{\psi}K_{n}^{\dagger}\ket{f}|)\\
			&\geq\sum_{n}\tr{K_{n}\ket{\psi}\bra{\psi}K_{n}^{\dagger}}g(\frac{|\bra{f}K_{n}\ket{\psi}\bra{\psi}K_{n}^{\dagger}\ket{f}|}{\tr{K_{n}\ket{\psi}\bra{\psi}K_{n}^{\dagger}}})\\
			&=\sum_{n}\tr{K_{n}\ket{\psi}\bra{\psi}K_{n}^{\dagger}}\Upsilon_{g}(\frac{K_{n}\ket{\psi}\bra{\psi}K_{n}^{\dagger}}{\tr{K_{n}\ket{\psi}\bra{\psi}K_{n}^{\dagger}}})\quad \ \
		\end{aligned}
	\end{equation}
	because of $g$ monotonically decresing and concave.
	
	Then for mixed states $\rho$, let $\rho=\sum_{i}q_{i}\ket{\psi_{i}}\bra{\psi_{i}}$ is an optimal pure-state ensembel with $\Upsilon_{g}(\rho)=\sum_{i}q_{i}\Upsilon_{g}(\ket{\psi_{i}}\bra{\psi_{i}})$. Then
\begin{widetext}
    \begin{equation}
		\begin{aligned}
			\Upsilon_{g}(\Lambda(\rho))
			&=\Upsilon_{g}(\sum_{n}K_{n}\rho K_{n}^{\dagger})\\
			&\leq\sum_{n}\tr{K_{n}\rho K_{n}^{\dagger}}\Upsilon_{g}(\frac{K_{n}\rho K_{n}^{\dagger}}{\tr{K_{n}\rho K_{n}^{\dagger}}})\\
			&=\sum_{n}\tr{K_{n}\rho K_{n}^{\dagger}}\Upsilon_{g}(\frac{K_{n}\sum_{i}q_{i}\ket{\psi_{i}}\bra{\psi_{i}}K_{n}^{\dagger}}{\tr{K_{n}\rho K_{n}^{\dagger}}})\\
			&=\sum_{n}\tr{K_{n}\rho K_{n}^{\dagger}}\Upsilon_{g}(\frac{\sum_{i}q_{i}K_{n}\ket{\psi_{i}}\bra{\psi_{i}}K_{n}^{\dagger}}{\tr{K_{n}\rho K_{n}^{\dagger}}})\\
			&\leq\sum_{n}\tr{K_{n}\rho K_{n}^{\dagger}}\sum_{i}\frac{q_{i}}{\tr{K_{n}\rho K_{n}^{\dagger}}}\tr{K_{n}\ket{\psi_{i}}\bra{\psi_{i}}K_{n}^{\dagger}}\Upsilon_{g}(\frac{K_{n}\ket{\psi_{i}}\bra{\psi_{i}}K_{n}^{\dagger}}{\tr{K_{n}\ket{\psi_{i}}\bra{\psi_{i}}K_{n}^{\dagger}}})\\
			&=\sum_{i}q_{i}(\sum_{n}\tr{K_{n}\ket{\psi_{i}}\bra{\psi_{i}}K_{n}^{\dagger}}\Upsilon_{g}(\frac{K_{n}\ket{\psi_{i}}\bra{\psi_{i}}K_{n}^{\dagger}}{\tr{K_{n}\ket{\psi_{i}}\bra{\psi_{i}}K_{n}^{\dagger}}}))\\
			&\leq\sum_{i}q_{i}\Upsilon_{g}(\ket{\psi_{i}}\bra{\psi_{i}})\\
			&=\Upsilon_{g}(\rho)
		\end{aligned}
	\end{equation}
\end{widetext}
    and the convexity of $\Upsilon_{g}$ and Eq.\eqref{26} are used here.
	
	3. Convexity:
	Let $\rho=\sum_{i}p_{i}\rho_{i}$, and $\rho_{i}=\sum_{j}q_{ij}\rho_{ij}$ be the optimal pure-state ensemble of each $\rho_{i}$, which satisfies $\Upsilon_{g}(\rho_{i})=\sum_{j}q_{ij}\Upsilon_{g}(\rho_{ij})$. Then
	\begin{equation}
		\begin{aligned}
			\Upsilon_{g}(\rho) &=\Upsilon_{g}(\sum_{i}p_{i}\sum_{j}q_{ij}\rho_{ij})\\ &=\Upsilon_{g}(\sum_{i,j}p_{i}q_{ij}\rho_{ij})\\ &=\min_{p_{k},\ket{\psi_{k}}}\sum_{k}p_{k}\Upsilon_{g}(\ket{\psi_{k}})\\ &\leq\sum_{i,j}p_{i}q_{ij}\Upsilon_{g}(\rho_{ij})\\ &=\sum_{i}p_{i}\Upsilon_{g}(\rho_{i})
		\end{aligned}
	\end{equation}
	and the inequality follows from the definition of $\Upsilon_{g}(\rho)$.
\end{proof}
Using Theorem 3, we can construct many interesting QST measures. We provide one example here that plays an important role in the state transformations discussed later.

\begin{exam}
	Let g(x)=$1-|x|^{2}$, then
	\bea
		&&\Upsilon_{g}(\ket{\psi})=1-|\iinner{f}{\psi}|^{2};\\
		&&\Upsilon_{g}(\rho)=\min_{\{p_{i},\ket{\psi_{i}}\}}\sum_{i}\Upsilon_{g}(\ket{\psi_{i}}).
	\eea
\end{exam}
This measure corresponds to the geometric measure presented in Ref. \cite{ref12}. However, we demonstrate that for mixed states, the geometric measure of QST is independent of the convex-roof construction, and we present its generalized form:
\begin{equation}
	\begin{aligned}
		\Upsilon_{g}(\rho) &=\min_{\{p_{i},\ket{\psi_{i}}\}}\sum_{i}\Upsilon_{g}(\ket{\psi_{i}})\\ &=\min_{\{p_{i},\ket{\psi_{i}}\}}\sum_{i}p_{i}(1-|\bra{f}\ket{\psi_{i}}|^{2})\\ &=1-\max_{\{p_{i},\ket{\psi_{i}}\}}\sum_{i}p_{i}\bra{f}\ket{\psi_{i}}\bra{\psi_{i}}\ket{f}\\ &=1-\max_{\{p_{i},\ket{\psi_{i}}\}}\bra{f}(\sum_{i}p_{i}\ket{\psi_{i}}\bra{\psi_{i}})\ket{f}\\ &=1-\bra{f}\rho\ket{f}.
	\end{aligned}
\end{equation}

\section{State transformations via free operations}

One of the main questions in quantum resource theory is whether, for two given states $\rho$ and $\sigma$, there exists a free operation $\Lambda_{f}$ that transforms $\rho$ into $\sigma$:
\begin{equation}\label{30}
\sigma=\Lambda_{f}(\rho).
\end{equation}
The existence of such a transformation immediately implies that $\rho$ is more resourceful than $\sigma$, and in particular,
\begin{equation}
R(\rho)\geq R(\sigma)
\end{equation}
for any resource measure $R$.

Even if $\rho$ cannot be converted into $\sigma$ deterministically via a free operation (for instance, if $R(\rho) < R(\sigma)$), it might still be possible to achieve the conversion probabilistically, provided the resource theory allows for stochastic free operations. These are described by free Kraus operators ${K_{j}}$ such that $\sum_{j}K_{j}^{\dagger}K_{j} \leq \mathbbm{1}$. It is further reasonable to assume that any incomplete set of free operators ${K_{j}}$ can be completed with additional free Kraus operators ${L_{i}}$ such that
\begin{equation}
\sum_{i}L_{i}^{\dagger}L_{i} + \sum_{j}K_{j}^{\dagger}K_{j} = \mathbbm{1}.
\end{equation}
The maximal probability for converting $\rho$ into $\sigma$ is then defined as:
\begin{equation}
P(\rho\to\sigma) = \max_{{K_{j}}} \left\{\sum_{j} p_{j} : \sigma = \frac{ \sum_{j} K_{j} \rho K_{j}^{\dagger} }{ \sum_{j} p_{j} } \right\}
\end{equation}
with probabilities $p_{j} = \tr( K_{j} \rho K_{j}^{\dagger} )$, and the maximum is taken over all sets of free Kraus operators ${K_{j}}$. The existence of a deterministic free operation between $\rho$ and $\sigma$ as in Eq.~\eqref{30} is then equivalent to $P(\rho\to\sigma) = 1$.

Stochastic transformations of quantum resources such as entanglement, coherence, and imaginarity have been widely discussed \cite{ref34,ref35,ref36}. In the following, we will discuss stochastic transformations within the framework of QST.

\subsection{Stochastic transformations for pure states}

We now provide the maximal probability for converting a pure state $\ket{\psi}$ into another pure state $\ket{\phi}$ via free operations of QST.
\begin{thrm}
	The maximum probability for pure qubit state transformation $\ket{\psi}\to\ket{\phi}$ via free operations of QST is given by
	\begin{equation}
		\begin{aligned}			P(\ket{\psi}\to\ket{\phi})=\min\{\frac{\Upsilon_{g}(\ket{\psi})}{\Upsilon_{g}(\ket{\phi})},1\}=\min\{\frac{1-|\iinner{f}{\psi}|^{2}}{1-|\iinner{f}{\phi}|^{2}},1\}.\quad \ \
		\end{aligned}
	\end{equation}
\end{thrm}
\begin{proof}
	For pure states
	\begin{equation}
		\begin{aligned}
			&\ket{\psi}=\iinner{f}{\psi}\ket{f}+\sqrt{1-|\iinner{f}{\psi}|^{2}}\ket{f^{\perp}}=\alpha\ket{f}+\beta\ket{f^{\perp}};\\ &\ket{\phi}=\iinner{f}{\phi}\ket{f}+\sqrt{1-|\iinner{f}{\phi}|^{2}}\ket{f^{\perp}}=\mu\ket{f}+\nu\ket{f^{\perp}}.\quad \ \
		\end{aligned}
	\end{equation}
	We can obtain the matrix form
	\begin{equation}
		\begin{aligned}
			&\ket{\psi}\bra{\psi}=
			\begin{pmatrix}
				|\alpha|^{2} & \alpha\beta^{\ast} \\ \alpha^{\ast}\beta & |\beta|^{2}
			\end{pmatrix}=
			\begin{pmatrix}
				|\alpha|^{2} & |\alpha||\beta|e^{i(\alpha-\beta)} \\ |\alpha||\beta|e^{i(\beta-\alpha)} & |\beta|^{2}
			\end{pmatrix};\\
			&\ket{\phi}\bra{\phi}=
			\begin{pmatrix}
				|\mu|^{2} & \mu\nu^{\ast} \\ \mu^{\ast}\nu & |\nu|^{2}
			\end{pmatrix}=
			\begin{pmatrix}
				|\mu|^{2} & |\mu||\nu|e^{i(\mu-\nu)} \\ |\mu||\nu|e^{i(\nu-\mu)} & |\nu|^{2}
			\end{pmatrix}\quad \ \
		\end{aligned}
	\end{equation}
	where $\alpha=|\alpha|e^{i\alpha}$, $\beta=|\beta|e^{i\beta}$, $\mu=|\mu|e^{i\mu}$, $\nu=|\nu|e^{i\nu}$.
	
	If $\ket{\psi}$ has stronger texture than $\ket{\phi}$, the transformation $\ket{\psi}\to\ket{\phi}$ can be achieved unit probability \cite{ref1}. Then we consider the case $\ket{\psi}$ has weaker texture than $\ket{\phi}$, that is
	\begin{equation}
		|\iinner{f}{\psi}|\geq|\iinner{f}{\phi}|\ ,\ |\alpha|\geq|\mu|.
	\end{equation}
	We prove that convert probability $p$ msut less than $\frac{1-|\iinner{f}{\psi}|^{2}}{1-|\iinner{f}{\phi}|^{2}}$ first. Let
	\begin{equation}
		K_{0}=
		\begin{pmatrix}
			a & b \\ c & d
		\end{pmatrix}
	\end{equation}
	be a free operation of QST, and $K_{1}=\sqrt{\mathbbm{1}-K_{0}^{\dagger}K_{0}}$, here the basis of its matrix form is $\{\ket{f},\ket{f^{\perp}}\}$. It must satisfies $K_{0}\ket{f}=a\ket{f}$, which means
	\begin{equation}
		\begin{aligned}
			K_{0}\ket{f}\bra{f}K_{0}^{\dagger} &=
			\begin{pmatrix}
				a & b \\ c & d
			\end{pmatrix}
			\begin{pmatrix}
				1 & 0 \\ 0 & 0
			\end{pmatrix}
			\begin{pmatrix}
				a^{\ast} & c^{\ast} \\ b^{\ast} & d^{\ast}
			\end{pmatrix}\\ &=
			\begin{pmatrix}
				|a|^{2} & ac^{\ast} \\ a^{\ast}c & |c|^{2}
			\end{pmatrix}\\ &=|a|^{2}\ket{f}\bra{f}
		\end{aligned}
	\end{equation}
	so there must be $c=c^{\ast}=0$. Therefore, we can assume
	\begin{equation}
		K_{0}=
		\begin{pmatrix}
			a & b \\ 0 & c
		\end{pmatrix}
	\end{equation}
	and we are hoping for
	\begin{equation}
		\begin{aligned}
			K_{0}\ket{\psi}\bra{\psi}K_{0}^{\dagger}=q\ket{\phi}\bra{\phi}
		\end{aligned}
	\end{equation}
	which means
	\begin{widetext}
		\begin{equation}
			\begin{aligned}
				K_{0}\ket{\psi}\bra{\psi}K_{0}^{\dagger} &=
				\begin{pmatrix}
					a & b \\ 0 & c
				\end{pmatrix}
				\begin{pmatrix}
					|\alpha|^{2} & |\alpha||\beta|e^{i(\alpha-\beta)}\\ |\alpha||\beta|e^{i(\beta-\alpha)} & |\beta|^{2}
				\end{pmatrix}
				\begin{pmatrix}
					a^{\ast} & 0 \\ b^{\ast} & c^{\ast}
				\end{pmatrix}\\ &=
				\begin{pmatrix}
					|a|^{2}|\alpha|^{2}+a^{\ast}b|\alpha||\beta|e^{i(\beta-\alpha)}+ab^{\ast}|\alpha||\beta|e^{i(\alpha-\beta)}+|b|^{2}|\beta|^{2} & ac^{\ast}|\alpha||\beta|e^{i(\alpha-\beta)}+bc^{\ast}|\beta|^{2} \\ a^{\ast}c|\alpha||\beta|e^{i(\beta-\alpha)}+b^{\ast}c|\beta|^{2} & |c|^{2}|\beta|^{2}
				\end{pmatrix}\\ &=
				\begin{pmatrix}
					q|\mu|^{2} & q|\mu||\nu|e^{i(\mu-\nu)} \\ q|\mu||\nu|e^{i(\nu-\mu)} & q|\nu|^{2}
				\end{pmatrix}\\ &=q\ket{\phi}\bra{\phi}.
			\end{aligned}
		\end{equation}
	\end{widetext}
	Observing the element in the second row and second column of the matrix leads to the conclusion that
	\begin{equation}
		q|\nu|^{2}=|c|^{2}|\beta|^{2}\Rightarrow q=|c|^{2}\frac{|\beta|^{2}}{|\mu|^{2}}
	\end{equation}
	where $|c|^{2}\leq1$ beacuse of $K_{0}^{\dagger}K_{0}\leq\mathbbm{1}$. So
	\begin{equation}
		q=|c|^{2}\frac{|\beta|^{2}}{|\mu|^{2}}\leq\frac{|\beta|^{2}}{|\mu|^{2}}=\frac{1-|\iinner{f}{\psi}|^{2}}{1-|\iinner{f}{\phi}|^{2}}
	\end{equation}
	and the maximal value of $q$ is obtained when $c=1$.
	
	We can construct a free operation for QST using the Kraus operators that maximize the conversion probability.
	\begin{equation}\label{44}
		K_{0}=
		\begin{pmatrix}
			a & 0\\ 0 & 1
		\end{pmatrix},
		K_{1}=\sqrt{\mathbbm{1}-K_{0}^\dag K_{0}}
	\end{equation}
	where $a$ is defined as:
	\begin{equation}
		\begin{aligned}
			a=\frac{|\beta||\mu|}{|\alpha||\nu|}e^{i[(\mu-\nu)-(\alpha-\beta)]}.
		\end{aligned}
	\end{equation}
	It is easy to verify that this is a free operation
	\begin{equation}
		K_{0}\ket{f}=a\ket{f},K_{1}\ket{f}=\sqrt{1-|a|^{2}}\ket{f}
	\end{equation}
	meanwhile $|a|=\frac{|\beta||\mu|}{|\alpha||\nu|}\leq1$, $p\leq1$.
	
	Therefore, the Kraus operator $K_{0}$ transforms the state $\ket{\psi}$ into $\ket{\phi}$ with probability  $p=\frac{1-|\iinner{f}{\psi}|^{2}}{1-|\iinner{f}{\phi}|^{2}}$ as follows:
\begin{widetext}
	\begin{equation}
		\begin{aligned}
			K_{0}\ket{\psi}\bra{\psi}K_{0}^{\dagger} &=
			\begin{pmatrix}
				|a|^{2}|\alpha|^{2} & a\alpha\beta^{\ast}\\ a^{\ast}\alpha^{\ast}\beta & |\beta|^{2}
			\end{pmatrix}\\ &=
			\begin{pmatrix}
				\frac{|\beta|^{2}|\mu|^{2}}{|\nu|^{2}|\alpha|^{2}|}|\alpha|^{2} & \frac{|\beta||\mu|}{|\alpha||\nu|}e^{i[(\mu-\nu)-(\alpha-\beta)]}|\alpha||\beta|e^{i(\alpha-\beta)} \\ \frac{|\beta||\mu|}{|\alpha||\nu|}e^{i[(\alpha-\beta)-(\mu-\nu)]}|\alpha||\beta|e^{i(\beta-\alpha)} & |\beta|^{2}
			\end{pmatrix}\\ &=
			\begin{pmatrix}
				\frac{|\beta|^{2}}{|\nu|^{2}}|\mu|^{2} & \frac{|\beta|^{2}}{|\nu|^{2}}|\mu||\nu|e^{i(\mu-\nu)} \\ \frac{|\beta|^{2}}{|\nu|^{2}}|\mu||\nu|e^{i(\nu-\mu)} & \frac{|\beta|^{2}}{|\nu|^{2}}|\nu|^{2}
			\end{pmatrix}\\ &=\frac{|\beta|^{2}}{|\nu|^{2}}|
			\begin{pmatrix}
				|\mu|^{2} & |\mu||\nu|e^{i(\mu-\nu)} \\ |\mu||\nu|e^{i(\nu-\mu)} & |\nu|^{2}
			\end{pmatrix}\\ &=p\ket{\phi}\bra{\phi}.
		\end{aligned}
	\end{equation}
\end{widetext}
\end{proof}

Theorem 5 enables the determination of the optimal transition probability between pure states in QST resource theory and establishes its relation to the geometric measure.

\subsection{Stochastic transformations for mixed states}
%
As a generalization of Theorem 5, we now present the maximal probability for converting a mixed state $\rho$ into a pure state $\ket{\psi}$ via QST free operations.

To avoid ambiguity, we explicitly specify the initial state $\rho$ and the target state $\ket{\psi}$ as follows:
\begin{equation}
	\begin{aligned}
		\rho=
		\begin{pmatrix}
			\rho_{11} & \rho_{12} \\ \rho_{12}^{\ast} & 1-\rho_{11}
		\end{pmatrix}=
		\begin{pmatrix}
			\rho_{11} & |\rho_{12}|e^{i\theta} \\ |\rho_{12}|e^{-i\theta} & 1-\rho_{11}
		\end{pmatrix},
	\end{aligned}
\end{equation}
\begin{equation}
	\begin{aligned}
		\ket{\psi}\bra{\psi} &=(\alpha\ket{f}+\beta\ket{f^{\perp}})(\alpha^{\ast}\ket{f}+\beta^{\ast}\ket{f^{\perp}})\\ &=
		\begin{pmatrix}
			|\alpha|^{2} & |\alpha||\beta|e^{i(\alpha-\beta)} \\ |\alpha||\beta|e^{i(\beta-\alpha)} & |\beta|^{2}
		\end{pmatrix}.
	\end{aligned}
\end{equation}

\begin{thrm}
	The maximum probability of converting mixed qubit state $\rho$ to pure qubit state $\ket{\psi}$ via free operations of QST is given by
	\begin{equation}
		\begin{aligned}			P(\rho\to\ket{\psi})\leq\min\{\frac{\Upsilon_{g}(\rho)}{\Upsilon_{g}(\ket{\psi})},1\}=\min\{\frac{1-\bra{f}\rho\ket{f}}{1-|\iinner{f}{\psi}|^{2}},1\}.\quad \ \
		\end{aligned}
	\end{equation}
\end{thrm}
\begin{proof}
	Similar to the proof of Theorem 5, We also consider the case $\rho$ has weaker texture than $\ket{\psi}$, that is
	\begin{equation}
		\bra{f}\rho\ket{f}\geq|\iinner{f}{\psi}|^{2}\ ,\  |\rho_{11}|\geq|\alpha|^{2}
	\end{equation}
	and $\frac{1-\bra{f}\rho\ket{f}}{1-|\iinner{f}{\psi}|^{2}}\leq1$.

As demonstrated in Theorem 5, we can assume
\begin{equation}
		\begin{aligned}
			K_{0}=\begin{pmatrix}
				a & b \\ 0 & c
			\end{pmatrix}\ ,\ K_{1}=\sqrt{\mathbbm{1}-K_{0}^{\dagger}K_{0}}
		\end{aligned}
	\end{equation}
We also require that
\begin{equation}
\begin{aligned}
K_{0}\rho K_{0}^{\dagger}=m\ket{\psi}\bra{\psi}+n\ket{f}\bra{f}
\end{aligned}
\end{equation}
This is because $\ket{f}\bra{f}$ is a non-texture state, and we are not concerned with the probability of its occurrence. Therefore, our objective is to obtain
	\begin{widetext}
		\begin{equation}
			\begin{aligned}
				K_{0}\rho K_{0}^{\dagger} &=
				\begin{pmatrix}
					a & b \\ 0 & c
				\end{pmatrix}
				\begin{pmatrix}
					\rho_{11} & |\rho_{12}|e^{i\theta} \\ |\rho_{12}|e^{-i\theta} & 1-\rho_{11}
				\end{pmatrix}
				\begin{pmatrix}
					a^{\ast} & 0 \\ b^{\ast} & c^{\ast}
				\end{pmatrix}\\ &=
				\begin{pmatrix}
					|a|^{2}\rho_{11}+a^{\ast}b\rho_{12}^{\ast}+ab^{\ast}\rho_{12}+|b|^{2}(1-\rho_{11}) & ac^{\ast}\rho_{12}+bc^{\ast}(1-\rho_{11}) \\ a^{\ast}c\rho_{12}^{\ast}+b^{\ast}c(1-\rho_{11}) & |c|^{2}(1-\rho_{11})
				\end{pmatrix}\\ &=
				\begin{pmatrix}
					m|\alpha|^{2}+n & m\alpha\beta^{\ast} \\ m\alpha^{\ast}\beta & m|\beta|^{2}
				\end{pmatrix}\\ &=m\ket{\psi}\bra{\psi}+n\ket{f}\bra{f}.
			\end{aligned}
		\end{equation}
	\end{widetext}
	The same can be obtained
	\begin{equation}
		\begin{aligned}
			m=|c|^{2}\frac{1-\rho_{11}}{|\beta|^{2}}\leq\frac{1-\rho_{11}}{|\beta|^{2}}=\frac{1-\bra{f}\rho\ket{f}}{1-|\iinner{f}{\psi}|^{2}}
		\end{aligned}
	\end{equation}
	and the maximal value of $m$ is obtained when $c=1$, here we note that $b$ does not affect the convert probability.
	
\end{proof}

In general, the conversion probability does not attain its maximum value, though for particular quantum states, the maximum is achievable.

\begin{cor}
	For mixed qubit state $\rho$ and pure qubit state $\ket{\psi}$ meet the condition
	\begin{equation}\label{56}
		|\rho_{12}|\geq\frac{|\alpha|}{|\beta|}(1-\rho_{11})
	\end{equation}
	the maximum probability of converting $\rho$ to $\ket{\psi}$ via free operations of QST is
	\begin{equation}
		\begin{aligned}			P(\rho\to\ket{\psi})=\min\{\frac{\Upsilon_{g}(\rho)}{\Upsilon_{g}(\ket{\psi})},1\}=\min\{\frac{1-\bra{f}\rho\ket{f}}{1-|\iinner{f}{\psi}|^{2}},1\}.\quad \ \
		\end{aligned}
	\end{equation}
\end{cor}

\begin{proof}
	If the initial state $\rho$ and the target state $\ket{\psi}$ meet the condition $|\rho_{12}|\geq\frac{|\alpha|}{|\beta|}(1-\rho_{11})$, we can provide a specific stochastic convert operator
	\begin{equation}\label{58}
		K_{0}=
		\begin{pmatrix}
			a & 0\\ 0 & 1
		\end{pmatrix}\ ,\
		K_{1}=\sqrt{\mathbbm{1}-K_{0}^\dag K_{0}}
	\end{equation}
	where $a$ is defined as:
	\begin{equation}
		a=\frac{(1-\rho_{11})|\alpha|}{|\rho_{12}||\beta|}e^{i(\alpha-\beta-\theta)}
	\end{equation}
	and $|a|=\frac{(1-\rho_{11})|\alpha|}{|\rho_{12}||\beta|}\leq1$ according to the condition we set.
	
	Then we can show that state $\rho$ transform to state $\ket{\psi}$ with the probability $p=\frac{1-\bra{f}\rho\ket{f}}{1-|\iinner{f}{\psi}|^{2}}$ after the action of $K_{0}$.
	\begin{widetext}
		\begin{equation}
			\begin{aligned}
				K_{0}\rho K_{0}^{\dagger} &=
				\begin{pmatrix}
					|a|^{2}\rho_{11} & a\rho_{12}\\ a^{\ast}\rho_{12}^{\ast} & 1-\rho_{11}
				\end{pmatrix}\\ &=
				\begin{pmatrix}
					\frac{(1-\rho_{11})^{2}|\alpha|^{2}}{|\rho_{12}|^{2}|\beta|^{2}}\rho_{11} & \frac{(1-\rho_{11})|\alpha|}{|\rho_{12}||\beta|}e^{i(\alpha-\beta-\theta)}|\rho_{12}|e^{i\theta} \\ \frac{(1-\rho_{11}|\alpha|)}{|\rho_{12}||\beta|}e^{i(-\alpha+\beta+\theta)}|\rho_{12}|e^{-i\theta} & (1-\rho_{11})
				\end{pmatrix}\\ &=
				\begin{pmatrix}
					\frac{1-\rho_{11}}{|\beta|^{2}}|\alpha|^{2}+\frac{(1-\rho_{11})|\alpha|^{2}[(1-\rho_{11})\rho_{11}-|\rho_{12}|^{2}]}{|\rho_{12}|^{2}|\beta|^{2}} & \frac{1-\rho_{11}}{|\beta|^{2}}|\alpha||\beta|e^{i(\alpha-\beta)} \\ \frac{1-\rho_{11}}{|\beta|^{2}}|\alpha||\beta|e^{i(\beta-\alpha)} & \frac{1-\rho_{11}}{|\beta|^{2}}|\beta|^{2}
				\end{pmatrix}\\ &=\frac{1-\rho_{11}}{|\beta|^{2}}
				\begin{pmatrix}
					|\alpha|^{2} & \alpha\beta^{\ast} \\ \alpha^{\ast}\beta & |\beta|^{2}
				\end{pmatrix}+\frac{(1-\rho_{11})|\alpha|^{2}[(1-\rho_{11})\rho_{11}-|\rho_{12}|^{2}]}{|\rho_{12}|^{2}|\beta|^{2}}\begin{pmatrix}
					1 & 0 \\ 0 & 0
				\end{pmatrix} \\ &=p\ket{\psi}\bra{\psi}+q\ket{f}\bra{f}.
			\end{aligned}
		\end{equation}
	\end{widetext}
\end{proof}

Therefore, for a chosen target state $|\psi\rangle$, if the initial state is pure, there exists a free operator (\ref{44}) that maximizes the conversion probability. If the initial state is mixed, one must first determine whether condition (\ref{56}) is satisfied. If it is, the maximum conversion probability can be achieved using the operator (\ref{58}) that we provide.

\section{Relationships between texture and other quantum properties}

The relationships between quantum resources constitute an important research direction in quantum resource theory \cite{ref37,ref38,ref39,ref40,ref41}. A connection between QST and coherence has been established \cite{ref1}, as the non-texture state coincides with the maximally coherent state. Furthermore, both resources, along with imaginarity and predictability, are basis-dependent, requiring a fixed reference frame for their definition. This commonality suggests a fundamental relationship between QST, coherence, imaginarity, and predictability.

Resource measures defined by the minimum distance between quantum states and free states can be intuitively visualized on the Bloch sphere \cite{ref40,ref41,ref21}. Therefore, we primarily analyze relationships based on the $l_{1}$-norm and $l_{2}$-norm, and attempt to illustrate them geometrically on the Bloch sphere.

\subsection{Relationship under $l_{1}$-norm measure}

Quantum state $\rho$ have a pauil decomposition
\begin{equation}
	\rho=\frac{1}{2}(\mathbbm{1}+x\sigma_{x}+y\sigma_{y}+z\sigma_{z})
\end{equation}
when $d=2$. At this point, the quantum state can be visually represented as a point on the Bloch sphere \cite{ref27}
\begin{equation}
	\rho=\vec{r}\cdot\vec{\sigma}
\end{equation}
where $\vec{r}=(x,y,z)$, and $\vec{\sigma}=(\sigma_{x},\sigma_{y},\sigma_{z})$. And the $l_{1}$-norm measure of QST can be expressed as \cite{ref12,ref27}:
\begin{equation}
	\begin{aligned}
		\Upsilon_{l_{1}}(\rho)=D_{l_{1}}(\rho,f) &=\tr{|\rho-f|}\\ &=\tr{|(\vec{r}-\vec{s})\vec{\sigma}|}\\ &=|\vec{r}-\vec{s}|\\ &=\sqrt{(x-1)^{2}+y^{2}+z^{2}}.
	\end{aligned}
\end{equation}

Meanwhile, refer to FIG. \ref{figure1}, the coherence measure $C_{l_{1}}$; imaginarity meaure $I_{l_{1}}$; predictability measure $P_{l_{1}}$ defined by $l_{1}$-norm can be expressed as follows \cite{ref16,ref21,ref41}:
\begin{equation}
	\begin{aligned}
		&C_{l_{1}}(\rho)=D_{l_{1}}(\rho,\Delta(\rho))=\sqrt{x^{2}+y^{2}};\\
		&I_{l_{1}}(\rho)=D_{l_{1}}(\rho,\tilde{\rho})=\sqrt{y^{2}};\\
		&P_{l_{1}}(\rho)=D_{l_{1}}(\rho,\overline{\rho})=\sqrt{z^{2}}
	\end{aligned}
\end{equation}
where $\Delta(\rho)$, $\tilde{\rho}$ and $\overline{\rho}$ are the closest incoherence state, real state and non-predictable state to $\rho$ respectively. Apparently
\begin{equation}\label{66}
	\begin{aligned}
		\Upsilon_{l_{1}}(\rho)
		&=\sqrt{(x-1)^{2}+y^{2}+z^{2}}\\
		&\leq\sqrt{(x-1)^{2}}+\sqrt{y^{2}}+\sqrt{z^{2}}\\
		&=1-x+\sqrt{y^{2}}+\sqrt{z^{2}}\\
		&=1-x+I_{l_{1}}(\rho)+P_{l_{1}}(\rho)
	\end{aligned}
\end{equation}
beacause $x\in[-1,1]$, so $x-1$ must be negative. Here we get a relationship between QST, imaginarity and predictability.

\begin{figure}[h]
	\centering
	\includegraphics[width=0.5\textwidth]{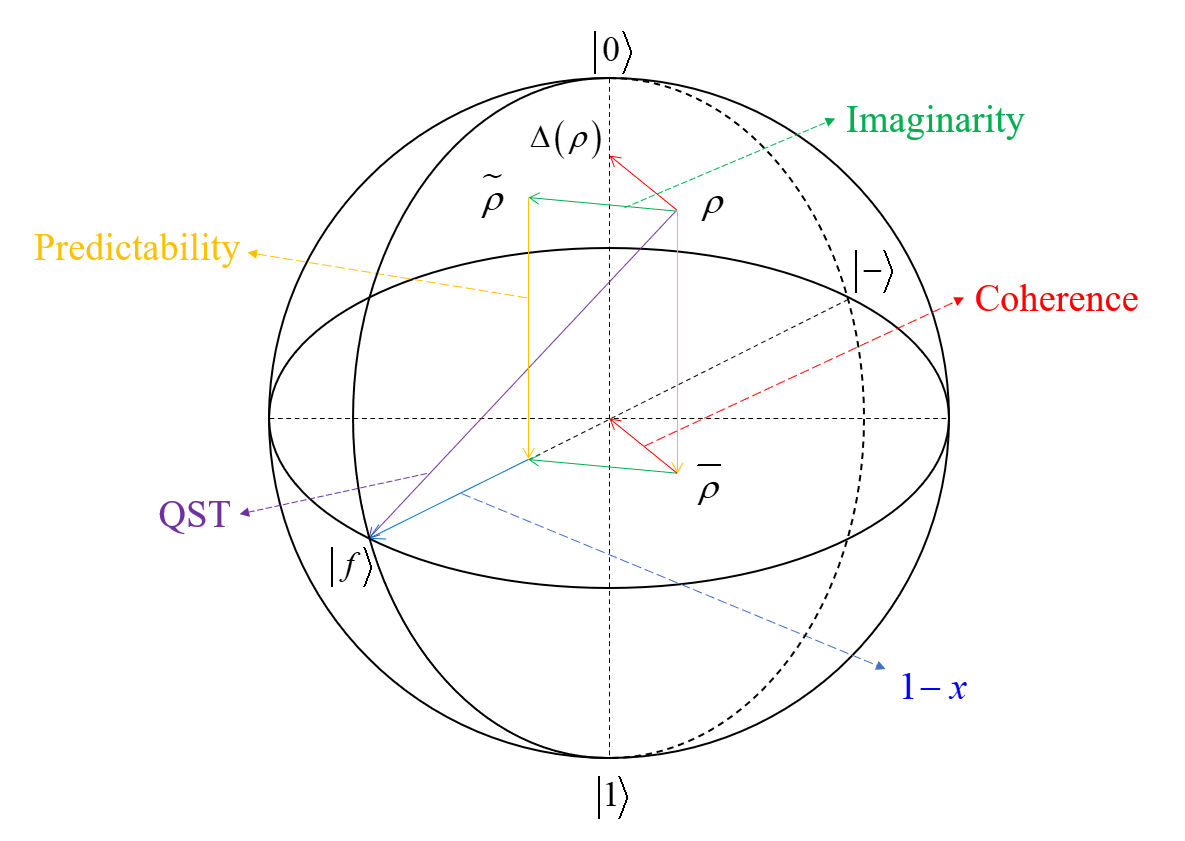}
	\caption{The purple vector (QST) can be decomposed into the vector sum of the yellow vector (Predictability), the green vector (Imaginarity) and the blue vector ($1-x$), serving as the intuitive representation on the Bloch sphere.}\label{figure1}
\end{figure}

Inequality (\ref{66}) reflects only a numerical relationship, as the parameters are treated as scalars. However, quantities such as QST, imaginarity, predictability, and even $1-x$ possess specific directions in the Bloch sphere representation, corresponding to different quantum state properties as illustrated in Fig. \ref{figure1}. By intuitively considering these parameters as vectors, we obtain
\begin{equation}
	\begin{aligned}
		\overrightarrow{\Upsilon_{l_{1}}(\rho)}=\overrightarrow{I_{l_{1}}(\rho)}+\overrightarrow{P_{l_{1}}(\rho)}+\overrightarrow{1-x}.
	\end{aligned}
\end{equation}
Naturally, Ineq. (\ref{66}) can also be obtained because of
\begin{equation}
	\vec{a}+\vec{b}=\vec{c}\Rightarrow a+b\geq c.
\end{equation}
When $\rho$ is on the X-axis, that is, $\rho$ is a linear combination of $\ket{+}\bra{+}$ and $\ket{-}\bra{-}$, the inequality becomes an equality
\begin{equation}
	\begin{aligned}
		\Upsilon_{l_{1}}(\rho)=\sqrt{(x-1)^{2}}=1-x
	\end{aligned}
\end{equation}
and corresponding to the intuitive representation on Bloch sphere $\overrightarrow{\Upsilon_{l_{1}}(\rho)}=\overrightarrow{1-x}$.

Seeking a more precise characterization, we now derive the exact relationship. Revisiting $\Upsilon_{l_{1}}$, when $x\geq0$ (i.e., for quantum states $\rho$ located in the front hemisphere of the Bloch sphere), we find
\begin{equation}
	\begin{aligned}
		\Upsilon_{l_{1}}(\rho)
		&\leq1-x+\sqrt{y^{2}}+\sqrt{z^{2}}\\
		&=1-\sqrt{x^{2}}+\sqrt{y^{2}}+\sqrt{z^{2}}\\
		&=1-(\sqrt{x^{2}}+\sqrt{y^{2}})+2\sqrt{y^{2}}+\sqrt{z^{2}}\\
		&=1-C_{l_{1}}(\rho)+2I_{l_{1}}(\rho)+P_{l_{1}}(\rho).
	\end{aligned}
\end{equation}
Here, QST exhibits a positive correlation with imaginarity and predictability, but a negative correlation with coherence. Furthermore, as evident from FIG. \ref{figure1}, the magnitude of QST (purple line) depends on the magnitudes of predictability (yellow line) and imaginarity (green line), while exhibiting a dual relationship with coherence (red line). In particular, when $\rho$ lies on the positive X-axis, $\Upsilon_{l_{1}}$ reaches its maximum value. At this point,
\begin{equation}
	\Upsilon_{l_{1}}=1-x=1-C_{l_{1}}.
\end{equation}

\begin{figure}[h]
	\centering
	\includegraphics[width=0.5\textwidth]{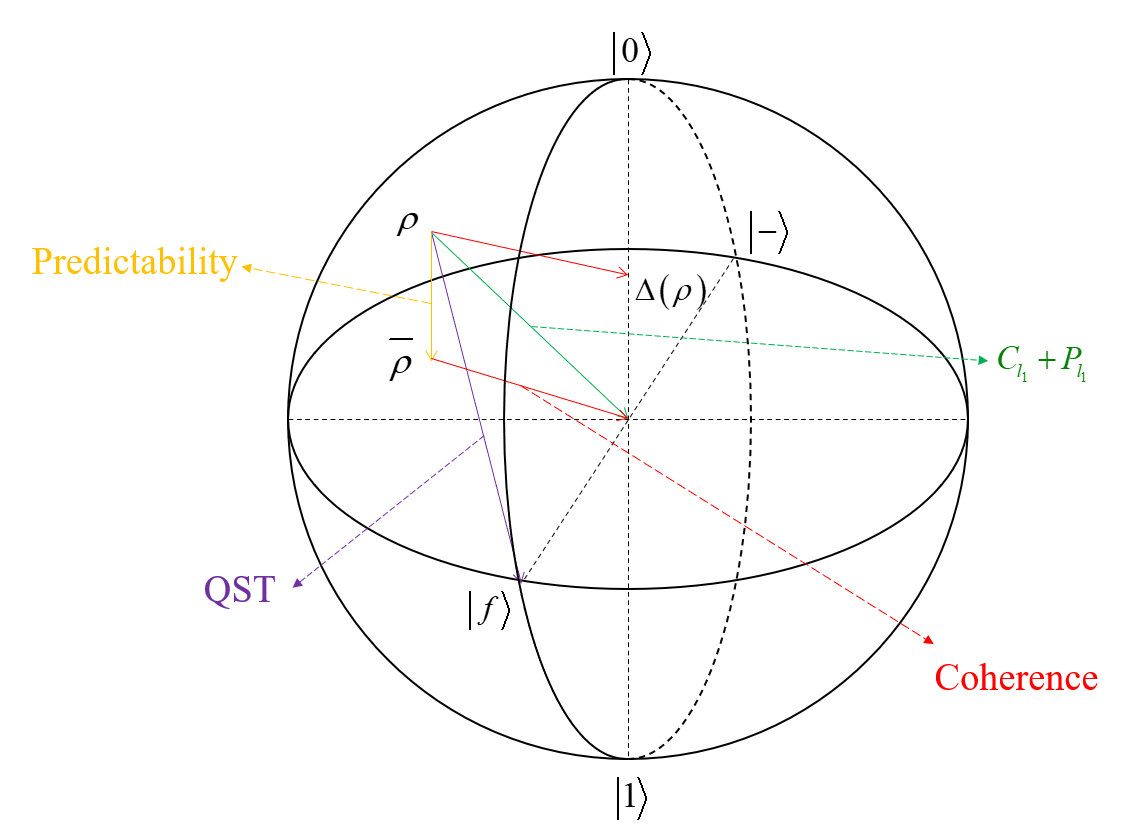}
	\caption{The vector sum of the red vector (Coherence) and the yellow vector (Predictability) constitutes the green vector ($C_{l_{1}}+P_{l_{1}}$), while the vector sum of the green vector ($C_{l_{1}}+P_{l_{1}}$) and the positive x-axis constitutes the purple vector (QST), serving as the intuitive representation on the Bloch sphere.}\label{figure2}
\end{figure}

When $x<0$, that is, when the quantum state $\rho$ is in the rear hemisphere, we get
\begin{equation}
	\begin{aligned}
		\Upsilon_{l_{1}}(\rho)
		&\leq1-x+\sqrt{y^{2}}+\sqrt{z^{2}}\\
		&=1+\sqrt{x^{2}}+\sqrt{y^{2}}+\sqrt{z^{2}}\\
		&=1+C_{l_{1}}(\rho)+P_{l_{1}}(\rho).
	\end{aligned}
\end{equation}
At this point, QST is determined solely by coherence and predictability, since coherence already incorporates imaginarity. This conclusion can also be derived from the vector relationship depicted in FIG. \ref{figure2}. Intuitively, the magnitude of QST (purple line) depends on the magnitudes of coherence (red line) and predictability (yellow line). Similarly, when $\rho$ is located on the negative X-axis, $\Upsilon_{l_{1}}$ reaches its maximum value. Under these conditions,
\begin{equation}
	\Upsilon_{l_{1}}=1+C_{l_{1}}.
\end{equation}
The preceding analysis treats all parameters as scalars and thus reflects only numerical relationships.

We now proceed to perform different decompositions of $\Upsilon_{l_{1}}$:
\begin{equation}
	\begin{aligned}
		\Upsilon_{l_{1}}^{2}(\rho)
		&=(x-1)^{2}+y^{2}+z^{2}\\
		&=x^{2}+y^{2}+z^{2}+(1-2x)\\
		&=C_{l_{1}}^{2}(\rho)+P_{l_{1}}^{2}(\rho)+(1-2x).
	\end{aligned}
\end{equation}
By control the range of $x$, we can obtain the upper or lower limits of the duality relationship.
\begin{equation}
	\begin{cases}
		\ C_{l_{1}}^{2}+P_{l_{1}}^{2}\leq\Upsilon_{l_{1}}^{2}\ ,&\ x\in[-1,\frac{1}{2})\\
		\ C_{l_{1}}^{2}+P_{l_{1}}^{2}=\Upsilon_{l_{1}}^{2}\ ,&\ x=\frac{1}{2}\\
		\ C_{l_{1}}^{2}+P_{l_{1}}^{2}\geq\Upsilon_{l_{1}}^{2}\ ,&\ x\in(\frac{1}{2},1]\\
	\end{cases}
\end{equation}

Due to the constraints imposed by the $l_{1}$-norm formulation, our analysis of relationships between quantum resources is limited to $d=2$. For $d>2$, the decomposition of $\Upsilon_{l_{1}}$ becomes considerably more complex, as each $x_{i}$ requires individual evaluation. We therefore seek alternative approaches to establish relationships in higher dimensions.
\subsection{Relationship under $l_{2}$-norm measure}

Now we try to establish relationship under $l_{2}$-norm measure. We defined a QST measure
\begin{equation}
	\Upsilon_{l_{2}}(\rho)=||\rho-f||_{l_{2}}^2=\sum_{i,j}|\rho_{ij}-f_{ij}|^{2}=\sum_{i,j}|\rho_{ij}-\frac{1}{d}|^{2}.\quad
\end{equation}
While this is intuitively a good measure of QST, but it has some drawbacks. According to the properties of $L_{2}$-norm, non-negativity and convexity are obvious, however
\begin{equation}
	\|\Lambda(\rho)-f\|_{l_{2}}=\|\Lambda(\rho)-\Lambda(f)\|_{l_{2}}=\|\Lambda(\rho-f)\|_{l_{2}}=\|\rho-f\|_{l_{2}}\quad
\end{equation}
because of $\|A\|_{l_{2}}=\|\Phi(A)\|_{l_{2}}$, $\Lambda$ is free operation among them. Whatever, this is an acceptable QST measure.

We will now decompose $\Upsilon_{l_{2}}$, get that
%
%
\begin{equation}
	\begin{aligned}
		\Upsilon_{l_{2}}(\rho) &=\sum_{i,j=1}^{d}|\rho_{ij}-\frac{1}{d}|^{2}\\ &=\sum_{i,j=1}^{d}(\rho_{ij}-\frac{1}{d})(\rho_{ij}^{\ast}-\frac{1}{d})\\ &=\sum_{i,j=1}^{d}(|\rho_{ij}|^{2}+\frac{1}{d^{2}}-\frac{1}{d}\rho_{ij}-\frac{1}{d}\rho_{ij}^{\ast})\\ &=1+\sum_{i,j=1}^{d}|\rho_{ij}|^{2}-\frac{2}{d}\sum_{i,j=1}^{d}\Re{(\rho_{ij})}\\
		&=\sum_{i\neq j}|\bra{i}\rho\ket{j}|^{2}+\sum_{i=1}^{d}|\bra{i}\rho\ket{i}|^{2}+(1-2\bra{f}\rho\ket{f})\\
		&=C_{l_{2}}(\rho)+P_{l_{2}}(\rho)+(1-2\bra{f}\rho\ket{f}).\quad
	\end{aligned}
\end{equation}

Among them, $C_{l_{2}}(\rho)=\sum_{i\neq j}|\bra{i}\rho\ket{j}|^{2}$ and $P_{l_{2}}(\rho)=\sum_{i}|\bra{i}\rho\ket{i}|^{2}$ are coherence and predictability defined by $l_{2}$-norm respectively \cite{ref42}, and
\begin{equation}\label{81}
	\bra{f}\rho\ket{f}=\frac{1}{d}\sum_{i,j=1}^{d}\rho_{ij}
\end{equation}
is inversely correlated with QST \cite{ref1}.

Here we notice that
\begin{equation}
	1-2\bra{f}\rho\ket{f}=2\Upsilon_{g}(\rho)-1
\end{equation}
and $\Upsilon_{g}$ is proposed in Example 4 above, then we can get
\begin{equation}
	C_{l_{2}}(\rho)+P_{l_{2}}(\rho)+2\Upsilon_{g}(\rho)=\Upsilon_{l_{2}}(\rho)+1.
\end{equation}
And to be able to get a new wave-particle relationship
\begin{equation}
	C_{l_{2}}(\rho)+P_{l_{2}}(\rho)=\Upsilon_{l_{2}}(\rho)-2\Upsilon_{g}(\rho)+1.
\end{equation}

To summarize, the $l_{2}$-norm enables investigation of numerical relationships across all dimensions. Notably, while QST consistently connects with coherence and predictability, imaginarity vanishes from these relationships---a phenomenon likely arising from the mathematical structure of $l_{2}$-norm.

\subsection{Relationship under skew information}

Quantum skew information plays a significant role in quantum information theory and has been used to quantify coherence \cite{ref43}. We therefore propose that it can also be employed to establish relationships between quantum resources.

Quantum skew information is defined as
\begin{equation}
	I(\rho,\ket{k}\bra{k})=-\frac{1}{2}\tr{\rho,\ket{k}\bra{k}}^{2}.
\end{equation}
And we are able to construct the skew information of $\ket{f}\bra{f}$
\begin{equation}
	I(\rho,\ket{f}\bra{f})=-\frac{1}{2}\tr{\rho,\ket{f}\bra{f}}^{2}.
\end{equation}
Decompose $I(\rho,\ket{f}\bra{f})$, we can get
\begin{widetext}
	\begin{equation}
		\begin{aligned}
			I(\rho,\ket{f}\bra{f}) &=-\frac{1}{2}\tr{[\sqrt{\rho},\ket{f}\bra{f}]}^{2}\\ &=-\frac{1}{2}\tr{[\sqrt{\rho},\ket{f}\bra{f}][\sqrt{\rho},\ket{f}\bra{f}]^{\dagger}}\\ &=-\frac{1}{2}\trace[(\sqrt{\rho}\ket{f}\bra{f}-\ket{f}\bra{f}\sqrt{\rho})(\ket{f}\bra{f}\sqrt{\rho}-\sqrt{\rho}\ket{f}\bra{f})]\\ &=-\frac{1}{2}\tr{\sqrt{\rho}\ket{f}\bra{f}\sqrt{\rho}+\ket{f}\bra{f}\rho\ket{f}\bra{f}-\ket{f}\bra{f}\sqrt{\rho}\ket{f}\bra{f}\sqrt{\rho}-\sqrt{\rho}\ket{f}\bra{f}\sqrt{\rho}\ket{f}\bra{f}}\\ &=-\frac{1}{2}[\tr{\rho\ket{f}\bra{f}}+\tr{\ket{f}\bra{f}\rho\ket{f}\bra{f}}-\tr{\ket{f}\bra{f}\sqrt{\rho}\ket{f}\bra{f}\sqrt{\rho}}-\tr{\sqrt{\rho}\ket{f}\bra{f}\sqrt{\rho}\ket{f}\bra{f}}]\\
			&=-\frac{1}{2}[\bra{f}\rho\ket{f}+\bra{f}\rho\ket{f}-(\bra{f}\sqrt{\rho}\ket{f})^{2}-(\bra{f}\sqrt{\rho}\ket{f})^{2}]\\  &=(\bra{f}\sqrt{\rho}\ket{f})^{2}-\bra{f}\rho\ket{f}\\ &=(A(\rho,f))^{2}-\frac{1}{d}-\frac{1}{d}\sum_{i\neq j}\rho_{ij},
		\end{aligned}
	\end{equation}
\end{widetext}
we used Eq. \eqref{81} here. Then we can get
\begin{equation}
	\begin{aligned}
		A(\rho,f)^{2}-\bra{f}\rho\ket{f}=I(\rho,\ket{f}\bra{f}).
	\end{aligned}
\end{equation}
The affinity
\begin{equation}
A(\rho, f) = \bra{f}\sqrt{\rho}\ket{f} = \sqrt{I(\rho, \ket{f}\bra{f}) + \bra{f}\rho\ket{f}}
\end{equation}
is also directly measurable, since both $I(\rho, \ket{f}\bra{f})$ and $\bra{f}\rho\ket{f}$ can be measured directly. This implies that the QST measure $\Upsilon_{H}$, defined via the Hellinger distance, can also be obtained through direct measurement. Similarly, an inequality can be derived:
\begin{equation}
	A(\rho,f)^{2}-I(\rho,\ket{f}\bra{f})-\frac{C_{l_{1}}(\rho)}{d}\leq\frac{1}{d}
\end{equation}
base on
\begin{equation}
	\begin{aligned}
			\sum_{i\neq j}\rho_{ij}\leq\sum_{i\neq j}|\rho_{ij}| =C_{l_{1}}(\rho).
		\end{aligned}
\end{equation}

We establish relationships among directly measurable quantities---$\bra{f}\rho\ket{f}$, affinity, and skew information---and incorporate coherence to derive an inequality. This inequality, based on the affinity and skew information relative to the texture, provides a measurable lower bound for the $l_1$-norm of coherence, $C_{l_{1}}$.
\section{Conclusion}

We have supplemented the framework with two new definitions of QST measures and have theoretically demonstrated that they satisfy the three fundamental conditions for a quantum-state texture measure, as outlined in Ref. \cite{ref1}. We also propose a general functional form for constructing QST measures via the convex roof method, which may play a key role in addressing certain measurement problems.

Furthermore, we discuss the maximal probability of state transformation via free operations, deriving the optimal transformation probability between pure states as well as the maximum probability for transforming mixed states into pure states.

Finally, we investigate the relationship between QST and other prominent quantum resources, establishing a series of equations and inequalities that quantify the connections among QST, coherence, imaginarity, and predictability from multiple perspectives. This work thereby provides valuable insights into the physical interrelations of quantum resources. We believe that the results presented in this paper contribute to the enrichment and advancement of QST resource theory and lay a solid theoretical foundation for its future development.

\begin{acknowledgments}
This work is supported by the Fundamental Research Funds for the Central Universities (Grants No. 3072025YC2404).
\end{acknowledgments}


%

\end{document}